\newtheorem{thm}{Theorem}
\newtheorem{lemma}[thm]{Lemma}
\newtheorem{cor}[thm]{Corollary}
\theoremstyle{definition}
\theoremstyle{remark}
\newcommand*{\set}[1]{\left\{#1\right\}} 
\newcommand*{\infe}{\underline{\tilde{\epsilon}}}
\newcommand*{\supe}{\overline{\tilde{\epsilon}}}
\newcommand*{\peake}{\epsilon^*}
\newcommand*{\infx}{a}
\newcommand*{\supx}{b}
\begin{document}
\title{Reversals of signal-posterior monotonicity imply a bias of screening} 
\author{
Sander Heinsalu
\thanks{Research School of Economics, Australian National University. HW Arndt Building, 25a Kingsley St, Acton ACT 2601, Australia.
Email: sander.heinsalu@anu.edu.au, 
website: \url{https://sanderheinsalu.com/}. 
}
}
\date{\today}
\maketitle

\begin{abstract}
This note strengthens the main result of \cite{lagziel+lehrer2019} (LL) ``A bias in screening'' using \cite{chambers+healy2011} (CH) ``Reversals of signal-posterior monotonicity for any bounded prior''. 
LL show that the conditional expectation of an unobserved variable of interest, given that a noisy signal of it exceeds a cutoff, may decrease in the cutoff. 
CH prove that the distribution of a variable given a lower signal may first order stochastically dominate the distribution given a higher signal. 

The nonmonotonicity result is also extended to the empirically relevant exponential and Pareto distributions, and a wide range of signals. 


Keywords: Filtering, Bayes' rule, updating, posterior beliefs, stochastic dominance, screening. 

JEL classification: C11, C60, D01, D81, D83, D84.
\end{abstract}

\section{Introduction}

\cite{lagziel+lehrer2019} (LL) prove that for any bounded random variable of interest, there exists a noisy signal such that the expectation of the variable conditional the signal exceeding a lower cutoff is larger than conditional on the signal passing a higher cutoff. 
\cite{chambers+healy2011} (CH) show that for any bounded support, there exists a signal such that for any random variable on that support, the distribution of the variable conditional on a lower signal realization first order stochastically dominates (FOSDs) the distribution conditional on a higher realization. 
This note strengthens the result of LL, both by replacing the conditional expectation order with the FOSD order, and by re-ordering the quantifiers. The proof modifies that of CH and introduces the novel technique of defining the cdf of one random variable as (one minus the pdf) of another. 

The present work also extends the results to exponential and Pareto distributions, which are empirically relevant for studying income and wealth. CH did not consider such unbounded distributions, and LL only prove a weaker result than their main theorem for the unbounded case (Lemma 2 in LL). 
For exponential and thicker-tailed distributions, Corollary~\ref{cor:independent} below shows that most pairs of signals produce the counterintuitive FOSD ranking, in contrast to the proofs of CH and LL, which rely on two specific signal realizations. 

The first substantive section establishes the connection between LL and CH. After that, Section~\ref{sec:intuitive} studies the limitations of the results, providing a sufficient condition to rule out the counterintuitive ranking of conditional distributions. Section~\ref{sec:exp} extends the counterintuitive FOSD ordering to unbounded distributions, including exponential and Pareto.

\section{The connection between the previous papers}
\label{sec:connection}

Before stating the results, the notation is introduced in Table~\ref{tab:notation}. After that, the central inequality 
in the first theorem in LL is generalized as Theorem~\ref{thm:extremescreeningold} below. The proof is modeled on the main result of CH, but introduces the trick of constructing a cdf as one minus the pdf of another random variable. 

\begin{table}[h!]
\centering
\caption{Notation correspondence. 
}
	\label{tab:notation}
\begin{tabular}{ccc}
\hline
This paper & CH & LL \\
\hline
 $X$ & $X$	& $V$ \\ 
 $\tilde{\epsilon}$ & $\tilde{\epsilon}$ & $N$ \\
 $Z$ & $Z$	& $V+N$ \\ 
 $F_X$ & $F$ & -- \\
 $F_{Z|X}(z|x)$ & $G_x(z)$ & -- \\
 $[a,b]$ & $[a,b]$ & $[0,1]$ or $[\underline{V},\overline{V}]$ \\
 $\mathbb{E}[X|Z\geq b_i]$ & -- & $\pi(b_i)$ \\
\hline
\end{tabular} 
\end{table}



\begin{thm} 
\label{thm:extremescreeningold}
Fix any $a<b$. There exists a family of conditional signal distributions $\set{F_{Z|X}(\cdot|x)}$ such that for any $X$ with support $[a,b]$, $F_{X|Z}(\cdot|b)$ strictly first-order stochastically dominates $F_{X|Z}(\cdot|2b-a)$. 
Furthermore, $\set{F_{Z|X}(\cdot|x)}$ forms an independent additive signal, and each $F_{Z|X}(\cdot|x)$ is unimodal. 
There exists a signal $S$ such that $F_{X|S}(\cdot|S\geq z) =F_{X|Z}(\cdot|z)$. 
\end{thm}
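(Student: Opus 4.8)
The plan is to construct $S$ so that the \emph{threshold} event $\{S\geq z\}$ reweights the prior on $X$ in exactly the same way as the \emph{point} event $\{Z=z\}$. By Bayes' rule, conditioning $X$ on $\{S\geq z\}$ multiplies the prior of $X$ by $x\mapsto P(S\geq z\,|\,X=x)$ and renormalizes, whereas conditioning on $\{Z=z\}$ multiplies it by $x\mapsto f_{Z|X}(z|x)$ and renormalizes. Hence the identity $F_{X|S}(\cdot\,|\,S\geq z)=F_{X|Z}(\cdot|z)$ follows immediately, for each fixed $z$, as soon as the survival function of $S$ satisfies
$$P(S\geq z\,|\,X=x)=c(z)\,f_{Z|X}(z|x)\quad\text{for all }x\in[a,b],$$
with $c(z)$ independent of $x$: the factor $c(z)$ then cancels between numerator and denominator.

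The construction realizes this identity through the device announced in the introduction, namely declaring the conditional survival function of $S$ to be a rescaling of the conditional \emph{density} of $Z$. Writing $f_{Z|X}(\cdot|x)=f_{\tilde\epsilon}(\cdot-x)$ for the additive, unimodal signal of the first part, and letting $m$ be the mode of $f_{\tilde\epsilon}$ and $f_{\tilde\epsilon}(m)$ its peak value, I would set
$$F_{S|X}(s|x)=1-\frac{f_{Z|X}(s|x)}{f_{\tilde\epsilon}(m)}\quad\text{for }s\geq x+m,\qquad F_{S|X}(s|x)=0\quad\text{for }s<x+m.$$
Because $\sup_{s}f_{\tilde\epsilon}(s-x)=f_{\tilde\epsilon}(m)$ is the \emph{same} constant for every $x$, the normalizer $c=1/f_{\tilde\epsilon}(m)$ does not depend on $x$, which is precisely what the Bayes step above demands.

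The main obstacle is to check that this formula defines a bona fide conditional cdf in $s$ simultaneously for every $x\in[a,b]$, and unimodality is exactly what makes this possible. To the right of its mode the likelihood $f_{\tilde\epsilon}(s-x)$ is non-increasing in $s$, so $F_{S|X}(\cdot|x)$ is non-decreasing, runs from $0$ at $s=x+m$ up to $1$, and stays in $[0,1]$ since $f_{\tilde\epsilon}(s-x)\leq f_{\tilde\epsilon}(m)$. The price of capping $F_{S|X}(\cdot|x)$ at $0$ on the increasing branch is that the survival identity, and with it the posterior identity $F_{X|S}(\cdot\,|\,S\geq z)=F_{X|Z}(\cdot|z)$, is exact only for signal values $z$ lying to the right of every mode, i.e.\ $z\geq b+m$. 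Taking the first-part noise to have nonpositive mode --- a normalization available because that noise is ours to construct --- puts both cutoffs $z=b$ and $z=2b-a$ in this range, which is all that is needed to transfer the strict FOSD ranking established for point-conditioning on $Z$ to threshold-conditioning on $S$.
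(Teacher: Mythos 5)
Your proposal proves only the last sentence of the theorem. The survival-function device you describe is sound and is in fact the paper's own trick: the paper normalizes the noise to have mode $0$ and peak value $1$, so that it can set $F_{S|X}(\cdot|x)=1-f_{Z|X}(\cdot|x)$ directly, and then $P(S\geq z\,|\,X=x)=f_{Z|X}(z|x)$ makes the posterior identity immediate; your version with a general mode $m\leq 0$ and division by the peak $f_{\tilde\epsilon}(m)$ is the same construction with the normalization carried along explicitly, and your observation that the identity only holds for $z\geq b+m$, which suffices for the two cutoffs $b$ and $2b-a$, is correct. (A cosmetic point: where the cdf jumps to $1$, one must distinguish $P(S\geq z)$ from $1-F_{S|X}(z|x)$, but this does not affect the cutoffs used.)

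The genuine gap is that you never construct the family $\{F_{Z|X}(\cdot|x)\}$ and never prove the strict FOSD claim, which is the main content of the theorem. You write of ``the additive, unimodal signal of the first part'' and of ``the strict FOSD ranking established for point-conditioning on $Z$,'' but no such first part exists in your argument, and the claim is false for generic unimodal additive noise: with uniform noise on $[0,2(b-a)]$, the likelihood at both $z'=b$ and $z''=2b-a$ is constant in $x$, so both posteriors equal the prior and no strict dominance holds. What must be engineered --- and what the paper supplies --- is a noise density such that the likelihood at $z''=2b-a$, namely $f_{\tilde\epsilon}(2b-a-x)$, is constant in $x$ on $[a,b]$ (so that posterior equals prior there), while the likelihood at $z'=b$ is strictly increasing in $x$ (so that the posterior at $z'$ strictly FOSDs the prior, which still requires proof valid for \emph{every} prior on $[a,b]$, including priors with atoms; the paper does this via an integration-by-parts estimate on $F_{X|Z}(w|z')$). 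The paper's triangle-plus-rectangle pdf, $f_{Z|X}(z|x)=1-\frac{2}{3}(z-x)$ on $[x,x+1)$ and $\frac{1}{3}$ on $[x+1,x+2)$ after normalizing $a=0$, $b=1$, satisfies all of these constraints simultaneously while also being decreasing with peak value $1$, exactly as your $S$-construction demands; note that this rules out, e.g., symmetric single-peaked noise, so the required density cannot be left unspecified. Without exhibiting such a density and verifying the ``for any $X$'' dominance computation, the theorem is not proved.
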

\begin{proof}
W.l.o.g., let $a=0,b=1$. Let the signal pdf be the union of a right-angled triangle and a rectangle: $f_{Z|X}(z|x):=\begin{cases}
1-\frac{2}{3}(z-x) & \text{if } z\in[x,x+1),\\
\frac{1}{3} & \text{if } z\in[x+1,x+2).\\
\end{cases}$
The noise term $\tilde{\epsilon}=Z-X$ is additive and independent. 
The positive mean of $\tilde{\epsilon}\geq0$ simplifies formulas and is w.l.o.g., because Bayes' rule de-biases the signal. 

Conditional on $z''=2 =2b-a$, the posterior equals the prior: $F_{X|Z}(x|z'')=F_{X}(x)$ for all $x$. 
Conditional on $z'=1 =b$, integrating by parts yields the posterior 
\begin{align}
\label{Fxzw}
&F_{X|Z}(w|z') =\frac{\int_{0}^{w}[x-z'+\frac{3}{2}]dF_{X}(x)}{\int_{0}^{1}[x-z'+\frac{3}{2}]dF_{X}(x)}
\notag =\frac{[\frac{3}{2}-z']F_{X}(w)+wF_{X}(w)-\int_{0}^{w}F_{X}(x)dx}{\frac{3}{2}-z'+1-\int_{0}^{1}F_{X}(x)dx}
\\& =\frac{[\frac{3}{2}-z']F_{X}(w)+\int_{0}^{w}[F_{X}(w)-F_{X}(x)]dx}{\frac{3}{2}-z'+\int_{0}^{1}[1-F_{X}(x)]dx}
\end{align}
Clearly $F_{X|Z}(0|z')=0\leq F_{X}(0)$ and $F_{X|Z}(1|z')=1\leq F_{X}(1)$ for any $z'$. 
Consider $w\in(0,1)$ next. 

The numerator of~(\ref{Fxzw}) is  
\begin{align*}
&F_{X}(w)\left\{\frac{3}{2}-z'+\int_{0}^{w}\left[1-\frac{F_{X}(x)}{F_{X}(w)}\right]dx\right\} 
< F_{X}(w)\left\{\frac{3}{2}-z'+\int_{0}^{w}\left[1-F_{X}(x)\right]dx\right\}
\\&\leq F_{X}(w)\left\{\frac{3}{2}-z'+\int_{0}^{1}\left[1-F_{X}(x)\right]dx\right\} . 
\end{align*}
The term in braces is the denominator of $F_{X|Z}(w|z')$, thus dividing by it yields  $F_{X|Z}(w|z') 
<F_{X}(w) =F_{X|Z}(w|z'')$. Therefore $F_{X|Z}(w|z')$ FOSDs $F_{X|Z}(w|z'')$ despite $z'<z''$. 

Construct another signal $S$ from $Z$ by taking $F_{S|X}(\cdot|x) =1-f_{Z|X}(\cdot|x)$, so that $F_{X|S}(\cdot|S\geq z) =F_{X|Z}(\cdot|z)$. For any $x$, $F_{S|X}(\cdot|x)$ is a cdf: increasing, with $F_{S|X}(x|x)=0$ and $F_{S|X}(x+2|x)=1$. 
\end{proof}

To the best of the author's knowledge, the technique in Theorem~\ref{thm:extremescreeningold} of 
constructing a signal $S$ from another signal $Z$ such that the information content of observing $S\geq \supx$ is equivalent to $Z=\supx$ is new. 
The main difficulty in the proof is to ensure that the cdf of $S$, which equals one minus the pdf of $Z$, is increasing and has the maximum value $1$. 

Theorem~\ref{thm:extremescreeningold} implies the following corollary, which is the central inequality in the first theorem of LL.
\begin{cor}
\label{cor:lagzielthm1old}
For every bounded random variable $X$, there exist a noise variable $\tilde{\epsilon}$ and cutoffs $b_1<b_2$ such that $\mathbb{E}[X|X+\tilde{\epsilon}\geq b_1] >\mathbb{E}[X] =\mathbb{E}[X|X+\tilde{\epsilon}\geq b_2]$. 
\end{cor}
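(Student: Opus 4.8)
The plan is to read the corollary off Theorem~\ref{thm:extremescreeningold} almost for free, using three of its conclusions: the strict FOSD ranking of the two posteriors, the coincidence of the posterior at $z''=2b-a$ with the prior, and the threshold reinterpretation through the auxiliary signal $S$. The conceptual content is all in the theorem; the corollary is the translation of a pointwise-conditioning statement into a threshold-conditioning statement about means.

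First I would fix an arbitrary bounded $X$ and let $[a,b]$ be an interval containing its support, with $a<b$. Applying Theorem~\ref{thm:extremescreeningold} to this $[a,b]$ produces the conditional signal family and the signal $S$ with $F_{X|S}(\cdot\mid S\geq z)=F_{X|Z}(\cdot\mid z)$. The key observation is that $S$ is itself an independent additive signal: since $f_{Z|X}(z\mid x)$ depends only on $z-x$, so does $F_{S|X}(s\mid x)=1-f_{Z|X}(s\mid x)$, whence $S=X+\tilde{\epsilon}$ for a noise term $\tilde{\epsilon}$ independent of $X$. This is exactly the form $X+\tilde{\epsilon}$ appearing in the corollary, so the conditioning event $\{S\geq b_i\}$ coincides with $\{X+\tilde{\epsilon}\geq b_i\}$.

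Next I would set $b_1=b$ and $b_2=2b-a$, so that $b_1<b_2$ precisely because $a<b$. At $b_2$ the theorem's proof gives $F_{X|Z}(\cdot\mid b_2)=F_X$, and the threshold identity then yields $\mathbb{E}[X\mid X+\tilde{\epsilon}\geq b_2]=\mathbb{E}[X\mid Z=b_2]=\mathbb{E}[X]$. At $b_1$ the theorem supplies the strict FOSD of $F_{X|Z}(\cdot\mid b_1)$ over $F_X$, which transports through the same identity to make $F_{X|S}(\cdot\mid S\geq b_1)$ strictly FOSD-dominate $F_{X|S}(\cdot\mid S\geq b_2)$. To finish, I would convert this FOSD ranking into the strict inequality of means. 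Integrating by parts, $\mathbb{E}[X\mid S\geq b_1]-\mathbb{E}[X\mid S\geq b_2]=\int_a^b\bigl[F_{X|S}(x\mid S\geq b_2)-F_{X|S}(x\mid S\geq b_1)\bigr]\,dx$; the integrand is nonnegative everywhere and, by the strict inequality $F_{X|Z}(w\mid b_1)<F_X(w)$ for $w$ in the interior of the support established in the proof of the theorem, it is strictly positive on an interval, so the difference is strictly positive. Chaining the two displays gives $\mathbb{E}[X\mid X+\tilde{\epsilon}\geq b_1]>\mathbb{E}[X]=\mathbb{E}[X\mid X+\tilde{\epsilon}\geq b_2]$, as required.

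I expect the only real subtlety to be the first step: recognizing that the corollary's threshold event is delivered by the auxiliary signal $S$ rather than by $Z$, and checking that $S$ is genuinely additive so that the statement is obtained in precisely its stated form. Everything after that is bookkeeping together with the standard FOSD-to-mean implication.
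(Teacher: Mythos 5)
Your proposal is correct and follows essentially the same route as the paper's own proof: apply Theorem~\ref{thm:extremescreeningold}, set $\tilde{\epsilon}=S-X$ with cutoffs $b_1=b$ and $b_2=2b-a$, use the coincidence of the posterior at the higher cutoff with the prior together with the strict FOSD at the lower cutoff, and pass from FOSD to the strict ranking of means. Your two added verifications --- that $F_{S|X}(s\mid x)=1-f_{Z|X}(s\mid x)$ depends only on $s-x$, so $\tilde{\epsilon}$ is independent additive, and the explicit integration-by-parts step from strict FOSD to strict mean inequality --- are details the paper leaves implicit, not a different argument.
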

\begin{proof}
W.l.o.g., let $\underline{X}=0,\overline{X}=1$. Take $b_1=1$, $b_2=2$ and $\tilde{\epsilon}=S-X$, where the cdf of $S$ is defined in the proof of Theorem~\ref{thm:extremescreeningold}. 
Then $F_{X|S}(\cdot|S\geq 1)<F_{X|S}(\cdot|S\geq 2) =F_{X}(\cdot)$, i.e.\ $F_{X|S}(\cdot|S\geq 1)$ strictly FOSDs $F_{X|S}(\cdot|S\geq 2)$. This implies $\mathbb{E}[X|X+\tilde{\epsilon}\geq 1] >\mathbb{E}[X|X+\tilde{\epsilon}\geq 2] =\mathbb{E}[X]$. 
\end{proof}

Claim 1 in LL says that for every bounded random variable $X$, there exists a continuous noise variable $\tilde{\epsilon}$ and cutoffs $b_1<b_2$ such that $\mathbb{E}[X|X+\tilde{\epsilon}\geq b_1] >\mathbb{E}[X|X+\tilde{\epsilon}\geq b_2] $. 
This claim is strengthened in the following corollary, which is derived analogously to CH. 
\begin{cor}
\label{cor:continuous}
For every bounded $X$, there exist a continuous noise variable $\tilde{\epsilon}$ and cutoffs $b_1<b_2$ such that for $S=X+\tilde{\epsilon}$, and any $w\in(0,1)$, $F_{X|S}(w|S\geq b_1) <F_{X|S}(w|S\geq b_2)$. 
\end{cor}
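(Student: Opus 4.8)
The plan is to reuse the continuous signal of Theorem~\ref{thm:extremescreeningold}, but to relocate the upper third of the noise mass so that the resulting noise carries no atom while the two threshold events behave like the two point events in that theorem. Concretely, I would take an additive, independent noise $\tilde{\epsilon}$ whose density is the triangle $1-\frac{2}{3}e$ on $[0,1)$, is zero on the gap $[1,2)$, and places its remaining mass $\frac{1}{3}$ as a thin rectangle on $[2,2+\eta]$ for some small $\eta>0$. This is exactly the continuous smoothing of the atom that the signal $S$ of Theorem~\ref{thm:extremescreeningold} (and of Corollary~\ref{cor:lagzielthm1old}) carries at the right endpoint, so $\tilde{\epsilon}$ is a genuine atomless random variable. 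I would then set the cutoffs to $b_1=1$ and $b_2=2$, matching $b$ and $2b-a$.

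The key step is to write each threshold-conditional cdf as the prior reweighted by the selection probability $g_b(x):=\Pr(S\ge b\mid X=x)=1-F_{\tilde{\epsilon}}(b-x)$, giving
\[
F_{X|S}(w\mid S\ge b)=\frac{\int_0^w g_b(x)\,dF_X(x)}{\int_0^1 g_b(x)\,dF_X(x)}.
\]
For $b_2=2$ I would note that $b_2-x\in[1,2]$ for every $x\in[0,1]$ lies in the zero-density gap, while all the relocated mass sits above $b_2$; hence $g_{b_2}(x)=\frac{1}{3}$ is constant on the support of $X$, so conditioning on $S\ge b_2$ returns the prior, $F_{X|S}(\cdot\mid S\ge 2)=F_X$. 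For $b_1=1$ I would compute $g_{b_1}(x)=1-F_{\tilde{\epsilon}}(1-x)=\frac{1+x+x^2}{3}$, which is strictly increasing on $[0,1]$.

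To finish, I would invoke that reweighting a nondegenerate distribution by a strictly increasing positive function shifts mass strictly upward: since $g_{b_1}$ is strictly increasing, $\int_0^w g_{b_1}\,dF_X=F_X(w)\,\mathbb{E}[g_{b_1}(X)\mid X\le w]<F_X(w)\,\mathbb{E}[g_{b_1}(X)]=F_X(w)\int_0^1 g_{b_1}\,dF_X$ for every $w\in(0,1)$ carrying mass on both sides, so $F_{X|S}(w\mid S\ge 1)<F_X(w)=F_{X|S}(w\mid S\ge 2)$; the degenerate $w$ with $F_X(w)\in\{0,1\}$ give equality, consistent with strict FOSD on the interior. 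Equivalently, the likelihood ratio $g_{b_1}/g_{b_2}$ is strictly increasing, which yields the reversal by an integration by parts exactly analogous to Theorem~\ref{thm:extremescreeningold}. The main obstacle, and the whole reason for relocating rather than deleting the upper mass, is to render $\tilde{\epsilon}$ continuous while keeping $g_{b_2}$ exactly flat: because the moved mass lies entirely above $b_2=2$ and $b_2-x\le 2$ on the support of $X$, the smoothing never touches the weights that matter, so the flatness of $g_{b_2}$ and the strict monotonicity of $g_{b_1}$ both survive intact.
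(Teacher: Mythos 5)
Your proposal is correct, but it reaches the corollary by a genuinely different construction than the paper's. The paper stays inside the Theorem~\ref{thm:extremescreeningold} framework: it perturbs the signal pdf (display~(\ref{fzxmod})) by giving the middle piece a small slope $\iota$ and appending a steep third piece, again defines $S$ via $F_{S|X,\iota}=1-f_{Z|X,\iota}$, and—because the $z''$-posterior then only \emph{approximates} the prior (it strictly FOSDs it, with $F_{X|Z,\iota}(\cdot|z'')\to F_X$ as $\iota\to 0$)—must close with a limiting step, choosing $\iota^*>0$ so that $F_{X|Z,\iota}(w|z')<F_{X|Z,\iota}(w|z'')$ for all $\iota\le\iota^*$; in particular the noise there depends on $X$ through $\iota^*$. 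You instead build the atomless noise density directly—triangle on $[0,1)$, zero density on the gap $[1,2)$, the residual mass $\tfrac13$ relocated to $[2,2+\eta]$—so that the selection weight $g_{b_2}(x)=1-F_{\tilde{\epsilon}}(2-x)=\tfrac13$ is \emph{exactly} constant on $[0,1]$ and conditioning on $S\ge 2$ returns the prior exactly, while $g_{b_1}(x)=\tfrac{1+x+x^2}{3}$ is strictly increasing; both computations check out, and the strict-reweighting step is sound whenever $X$ has mass on both sides of $w$, which holds for every $w\in(0,1)$ under the paper's implicit support-$[0,1]$ normalization. What your route buys: no limit argument, hence no need to verify that a single $\iota^*$ works simultaneously for all $w\in(0,1)$ (a point the paper glosses, since both posteriors' gaps to the prior vanish as $w\to 0$ or $w\to 1$), and a single fixed noise that works for \emph{every} $X$ with support $[0,1]$—i.e., the stronger CH-style quantifier order, whereas the paper's proof only delivers the LL order stated in the corollary. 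What the paper's route buys: it preserves the ``cdf equals one minus pdf'' device, so the same family simultaneously yields the pointwise-signal reversal $F_{X|Z,\iota}(\cdot|z')$ versus $F_{X|Z,\iota}(\cdot|z'')$, and its noise support remains a single interval, while yours has a hole. One cosmetic caveat: your noise is not literally ``the continuous smoothing'' of the Theorem~\ref{thm:extremescreeningold} signal $S$—that noise has cdf $\tfrac23 e$ on $[0,1)$ (uniform density $\tfrac23$) plus an atom at $2$, whereas you use the triangle density $1-\tfrac23 e$—but this misattribution is harmless because you verify the weights $g_{b_1},g_{b_2}$ from scratch.
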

\begin{proof}
Take $\iota\in(0,1)$ small, $\xi\in[1,\frac{2-\iota}{1+\iota})$ and $h=\frac{2+\iota+\iota^2-\xi}{2+\xi+\iota}$ and define 
\begin{align}
\label{fzxmod}
f_{Z|X,\iota}(z|x):=\begin{cases}
1-\frac{(z-x)(1-h)}{\xi} & \text{if } z\in[x,x+\xi),\\
h-(z-x-\xi)\iota & \text{if } z\in[x+\xi,x+\xi+1),\\
h-\iota -(z-x-\xi-1)\frac{h-\iota}{\iota} & \text{if } z\in[x+\xi+1,x+\xi+1+\iota].\\
\end{cases}
\end{align}
Conditional on $z''=1+\xi$, the posterior is 
\begin{align*}
&F_{X|Z,\iota}(w|z'') =\frac{\int_{0}^{w}\left[h-(z''-x-\xi)\iota\right]dF_{X}(x)}{\int_{0}^{1}\left[h-(z''-y-\xi)\iota\right]dF_{X}(y)}
\notag =\frac{\int_{0}^{w}\left[h/\iota-1+x\right]dF_{X}(x)}{\int_{0}^{1}\left[h/\iota-1+y\right]dF_{X}(y)}, 
\end{align*} 
which strictly FOSDs the prior. 
Clearly $\lim_{\iota\rightarrow0}F_{X|Z,\iota}(w|z'') =F_{X}(w)$ for any $w$. 

Conditional on $z'=1$, the posterior is 
\begin{align*}
&F_{X|Z,\iota}(w|z') =\frac{\int_{0}^{w}\left[1-\frac{(z'-x)(1-h)}{\xi}\right]dF_{X}(x)}{\int_{0}^{1}\left[1-\frac{(z'-y)(1-h)}{\xi}\right]dF_{X}(y)}
\notag =\frac{\int_{0}^{w}\left[\frac{\xi}{1-h}-z'+x\right]dF_{X}(x)}{\int_{0}^{1}\left[\frac{\xi}{1-h}-z'+y\right]dF_{X}(y)}. 
\end{align*} 
The same reasoning as in the proof of Theorem~\ref{thm:extremescreeningold} establishes that $F_{X|Z,\iota}(w|z')<F_{X}(w)$ independently of $\iota$. Therefore there exists $\iota^*>0$ s.t.\ for all $\iota\in[0,\iota^*]$, $F_{X|Z,\iota}(w|z')<F_{X|Z,\iota}(w|z'')$. 

Take $F_{S|X,\iota} =1-f_{Z|X,\iota}$, $b_1=z'=1$ and $b_2=z''=1+\xi$ to complete the proof. 
\end{proof} 

Corollary~\ref{cor:continuous} concludes the generalization of LL using CH. 
The next section delimits the results by providing sufficient conditions for an updated distribution given a lower signal not to FOSD the distribution given a higher signal. After that, the counterintuitive ranking is extended to the unbounded exponential and Pareto distributions.

\section{Ruling out a counterintuitive FOSD order}
\label{sec:intuitive}

The following lemma provides conditions that preclude the counterintuitive FOSD ranking found in Theorem~\ref{thm:extremescreeningold} and Corollary~\ref{cor:continuous}. The intuition for the conditions is that either a lower signal rules out some large values of $X$, which a higher signal permits, or a higher signal rules out small realizations of $X$, which a lower signal allows. 
To state the lemma, denote the range of $\tilde{\epsilon}$ (the bounds of its support) by $[\infe,\supe]$. 

\begin{lemma}
\label{lem:ruleout}
Fix $\tilde{\epsilon}$ and $z_1<z_2$. If either $x\in(z_1-\infe,z_2-\infe]$ or $x\in[z_1-\supe,z_2-\supe)$ has positive probability, then $F_{X|Z}(\cdot|z_1)$ does not FOSD $F_{X|Z}(\cdot|z_2)$ even weakly. 
\end{lemma}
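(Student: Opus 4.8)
The plan is to refute the dominance pointwise. Since $F_{X|Z}(\cdot|z_1)$ FOSDs $F_{X|Z}(\cdot|z_2)$ precisely when $F_{X|Z}(w|z_1)\le F_{X|Z}(w|z_2)$ for every $w$, it suffices to exhibit a single $w$ at which the conditional cdf given $z_1$ strictly exceeds the one given $z_2$. Everything rests on one support observation: because the signal is additive with $\tilde{\epsilon}\in[\infe,\supe]$, conditional on $Z=z$ we have $X=z-\tilde{\epsilon}$, so $X\in[z-\supe,z-\infe]$ almost surely. In particular $F_{X|Z}(z-\infe\mid z)=1$ and $F_{X|Z}(w\mid z)=0$ for every $w<z-\supe$. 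Since $z_1<z_2$, the two admissible windows $[z_1-\supe,z_1-\infe]$ and $[z_2-\supe,z_2-\infe]$ are shifted relative to one another, and the two hypotheses of the lemma place prior mass exactly in the gaps thereby created at the top and at the bottom.

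For the first hypothesis I would test dominance at $w=z_1-\infe$, the largest value $X$ can take given $Z=z_1$. The support observation gives $F_{X|Z}(z_1-\infe\mid z_1)=1$, so it is enough to show $F_{X|Z}(z_1-\infe\mid z_2)<1$, i.e.\ that $X$ keeps positive conditional probability strictly above $z_1-\infe$ given $Z=z_2$. If the windows do not overlap, namely $z_2-\supe>z_1-\infe$, this is immediate, since then $X\ge z_2-\supe>z_1-\infe$ given $z_2$ and the conditional cdf at $z_1-\infe$ equals $0$. Otherwise $z_2-\supe\le z_1-\infe$, so the whole interval $(z_1-\infe,z_2-\infe]$ lies inside the window $[z_2-\supe,z_2-\infe]$ compatible with $z_2$; every point of it carries positive noise density, and the assumed positive prior mass there converts into positive conditional mass above $z_1-\infe$, again yielding $F_{X|Z}(z_1-\infe\mid z_2)<1$. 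Either way dominance fails at $w=z_1-\infe$.

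The second hypothesis is handled symmetrically at the bottom of the supports. Given $Z=z_2$ one has $X\ge z_2-\supe$, hence $F_{X|Z}(w\mid z_2)=0$ for all $w<z_2-\supe$; it remains to show $F_{X|Z}(\cdot\mid z_1)$ assigns positive probability below $z_2-\supe$. When $z_2-\supe>z_1-\infe$ this is automatic, the entire $z_1$-window sitting below $z_2-\supe$; when $z_2-\supe\le z_1-\infe$ the interval $[z_1-\supe,z_2-\supe)$ lies inside the window $[z_1-\supe,z_1-\infe]$ compatible with $z_1$, so the assumed prior mass there produces positive conditional mass strictly below $z_2-\supe$. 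Consequently $\lim_{w\uparrow z_2-\supe}F_{X|Z}(w\mid z_1)>0$ while $F_{X|Z}(w\mid z_2)=0$ for all $w<z_2-\supe$, so some $w<z_2-\supe$ gives $F_{X|Z}(w\mid z_1)>0=F_{X|Z}(w\mid z_2)$, again refuting dominance.

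The only real obstacle I anticipate is the bridge from ``positive prior probability on the band'' to ``positive conditional probability given the relevant signal'': a priori the prior mass in the stated band could sit at values to which the conditioning signal assigns zero noise density, so that it never surfaces in $F_{X|Z}(\cdot\mid z_i)$. The resolution above is to split on whether the two admissible windows overlap. In the overlapping regime the band is contained in the correct window and compatibility is automatic; in the non-overlapping regime the two conditional supports are disjoint with the $z_1$-support lying entirely below the $z_2$-support, which by itself reverses the dominance direction and forces the conclusion without even invoking the band. The remaining care is purely bookkeeping: keeping the open and closed ends of the intervals consistent with where $F_{X|Z}(\cdot\mid z_i)$ equals $0$ or $1$.
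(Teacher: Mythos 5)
Your proof is correct and takes essentially the same route as the paper's: the first hypothesis is refuted at $w=z_1-\infe$, where $F_{X|Z}(w|z_1)=1$ while the posterior given $z_2$ retains mass above $w$, and the second just below $w=z_2-\supe$, where $F_{X|Z}(w|z_2)=0$ while the posterior given $z_1$ keeps mass below. Your overlap/non-overlap split and the step ``every point of the band carries positive noise density'' rest on the same implicit assumption as the paper's terser ``which $z_2$ allows''---namely that the noise density is positive throughout $[\infe,\supe]$, not merely that these are the support's bounds---and your more careful bookkeeping in the second case in effect corrects a slip in the paper's proof, which there misstates the relevant band as $(z_1-\infe,z_2-\infe]$ instead of $[z_1-\supe,z_2-\supe)$.
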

\begin{proof}
If $\Pr\left(x\in(z_1-\infe,z_2-\infe]\right)>0$, then $z_1$ rules out the high values $x\in(z_1-\infe,z_2-\infe]$ of $X$, which $z_2$ allows. This implies $F_{X|Z}(x|z_1) =1 >F_{X|Z}(x|z_2)$. 
If $\Pr\left(x \in[z_1-\supe,z_2-\supe)\right)>0$, then $z_2$ rules out the low realizations $x\in(z_1-\infe,z_2-\infe]$ of $X$, which $z_1$ permits. This implies $F_{X|Z}(x|z_1) >0 =F_{X|Z}(x|z_2)$ for some $x \in(z_1-\infe,z_2-\infe)$. 
\end{proof}
Lemma~\ref{lem:ruleout} only rules out a FOSD ranking non-monotone in specific signals (or non-monotone in the lower cutoff for the transformed signal $S$ used in Theorem~\ref{thm:extremescreeningold}). 
The conditional expectation of $X$ given a signal above a cutoff may still be non-monotone in the cutoff,\footnote{For example, take $f_X(x)=(1-\iota)\text{\textbf{1}}_{[0,1]}+\iota \phi(x)$, where $\phi$ is the standard normal pdf and $\iota>0$ is small enough, and use the signal cdf $F_{S|X}$ and cutoffs $a,2b-a$ from the proof of Theorem~\ref{thm:extremescreeningold}. 
} 
i.e., the LL result may still hold when the stronger CH result fails. 

The next corollary gives sufficient conditions for no pair of signals to generate the reversed FOSD ranking found in Theorem~\ref{thm:extremescreeningold} and Corollary~\ref{cor:continuous}. 
The sufficient conditions hold when $X$ is unbounded and has a positive density, 
but the noise $\tilde{\epsilon}$ is bounded. 
The conditions also hold when both $X$ and $\tilde{\epsilon}$ are bounded below, but not above, or both above, but not below. 

\begin{cor}
\label{cor:ruleout}
If the support $[\infx,\supx]$ of $X$ is an interval and either 
(i) $\supe-\infe\leq \supx-\infx$, 
(ii) $\infe>-\infty$ and $\supx=\infty$, or
(iii) $\supe<\infty$ and $\infx=-\infty$, 
then for any $z_1<z_2$, $F_{X|Z}(\cdot|z_1)$ does not FOSD $F_{X|Z}(\cdot|z_2)$ even weakly.  
\end{cor}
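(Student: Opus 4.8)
The plan is to reduce everything to Lemma~\ref{lem:ruleout}: for a fixed admissible pair $z_1<z_2$ it suffices to show that one of the two intervals $I_1:=(z_1-\infe,z_2-\infe]$ and $I_2:=[z_1-\supe,z_2-\supe)$ carries positive $X$-probability, and I would argue that each of (i)--(iii) forces this. The one structural fact I use about $X$ is that, its support being the interval $[\infx,\supx]$, every subinterval of $(\infx,\supx)$ of positive length has positive probability; since $I_1$ and $I_2$ each have length $z_2-z_1>0$, checking positive probability is the same as checking positive-length overlap with $(\infx,\supx)$.

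First I would record the only consequence of admissibility that the argument needs. For $F_{X|Z}(\cdot|z_i)$ to be a bona fide conditional law, $z_i$ must lie in the support of $Z=X+\tilde\epsilon$, so the conditional support $[\infx,\supx]\cap[z_i-\supe,z_i-\infe]$ of $X$ given $z_i$ has positive length. Reading off its endpoints gives $z_2-\infe>\infx$ from admissibility of $z_2$ and $z_1-\supe<\supx$ from admissibility of $z_1$ (for cutoffs outside the support of $Z$ the claimed conditionals are undefined, so the statement is vacuous there).

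The heart of the proof is a short contradiction argument. Suppose both $I_1$ and $I_2$ have zero probability. Since each is an interval of positive length, vanishing probability forces it to sit, up to endpoints, outside $(\infx,\supx)$; for the connected set $I_1$ this means either $z_2-\infe\le\infx$ or $z_1-\infe\ge\supx$, and the former is excluded by $z_2-\infe>\infx$, leaving $z_1-\infe\ge\supx$. Symmetrically, zero probability of $I_2$ together with $z_1-\supe<\supx$ forces $z_2-\supe\le\infx$. This is exactly where the three hypotheses separate: under (ii) the requirement $z_1-\infe\ge\supx=\infty$ is impossible because $\infe$ is finite; under (iii) the requirement $z_2-\supe\le\infx=-\infty$ is impossible because $\supe$ is finite; and under (i), chaining the two conclusions with $z_1<z_2$ yields $\supx+\infe\le z_1<z_2\le\infx+\supe$, hence $\supx-\infx<\supe-\infe$, contradicting $\supe-\infe\le\supx-\infx$. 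In every case the supposition collapses, so at least one of $I_1,I_2$ has positive probability and Lemma~\ref{lem:ruleout} rules out the FOSD ranking.

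The step I expect to require the most care is the inference from ``zero probability'' to the endpoint inequalities, which quietly uses that the support is a genuine interval (so that positive length and positive probability coincide) and that the half-open endpoints of $I_1,I_2$ are immaterial because a continuous law places no mass on a point. The only other bookkeeping is ensuring that $z_i-\infe$ and $z_i-\supe$ stay finite precisely when the corresponding noise bound is finite, which is exactly what (ii) and (iii) stipulate, so that ``$\ge\supx=\infty$'' and ``$\le\infx=-\infty$'' can be read off as genuine impossibilities.
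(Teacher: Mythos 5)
Your proof is correct, and it runs on the same engine as the paper's: the two intervals $(z_1-\infe,z_2-\infe]$ and $[z_1-\supe,z_2-\supe)$ together with Lemma~\ref{lem:ruleout}. The organization differs, though. The paper invokes the lemma only for case (i), where it simply \emph{asserts} that $\supe-\infe\leq \supx-\infx$ makes one of the two intervals carry positive probability; cases (ii) and (iii) are then handled by separate direct arguments, normalizing $\infe=0$ (resp.\ $\supe=0$) by translation and exhibiting a point $w$ at which the two conditional cdfs are strictly ordered the wrong way. You instead fold all three cases into one contradiction: if both intervals were null, the interval-support property forces $z_1-\infe\geq\supx$ and $z_2-\supe\leq\infx$ (the other branches being excluded by admissibility of the signals), and each of (i)--(iii) makes one of these impossible. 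This buys you precisely the verification the paper omits in case (i) --- your chain $\supx+\infe\leq z_1<z_2\leq\infx+\supe$ is the missing step behind the paper's ``either/or'' claim --- at the cost of routing (ii) and (iii) through Lemma~\ref{lem:ruleout} rather than re-deriving them, which loses nothing since the lemma's proof and the paper's direct arguments rest on the same implicit requirement that the conditional law given $z_2$ actually places mass on the relevant interval. Two small points of care. First, admissibility of $z_2$ alone gives only $z_2-\infe\geq\infx$: at the exact boundary $z_2=\infx+\infe$ the conditional is a point mass at $\infx$ and your strict inequality fails; strictness really comes from \emph{joint} admissibility of the pair, since $z_1<z_2$ with $z_1$ in the support of $Z$ forces $z_2>\infx+\infe$ (and symmetrically $z_1-\supe<\supx$ from $z_2\leq\supx+\supe$). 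Second, like the paper's, your case (i) implicitly reads $\supe-\infe\leq\supx-\infx$ as a comparison of finite quantities: if both sides are infinite, the rearrangement at the end of your chain is unavailable, and indeed Corollary~\ref{cor:independent} shows the conclusion can then fail (e.g.\ $X$ bounded above but unbounded below with exponential noise), so the implicit finiteness is a feature of the statement, not a defect of your argument.
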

\begin{proof} 
If the support $[\infx,\supx]$ of $X$ is an interval and $\supe-\infe\leq \supx-\infx$, then for any $z_1<z_2$, either $\Pr\left(x\in(z_1-\infe,z_2-\infe]\right)>0$ or $\Pr\left(x \in[z_1-\supe,z_2-\supe)\right)>0$, so Lemma~\ref{lem:ruleout} applies. 

If $\infe>-\infty$, then w.l.o.g.\ take $\infe=0$, because translating the signal $Z$ does not affect Bayesian updating. 
If $\infe=0$, then $z_i$ rules out $x>z_i$, so conditional on $z_1$, $x\in(z_1,z_2]$ is impossible. 
If $\supx=\infty$ and the support of $X$ is an interval, then for any possible $z_2>z_1\geq \infx$, the support of $X$ contains $(z_1,z_2]$. Thus for $\eta>0$ small enough, $F_{X|Z}(z_2-\eta|z_1)>0=F_{X|Z}(z_2-\eta|z_2)$. 

If $\infe=-\infty$, but $\supe<\infty$, then w.l.o.g.\ take $\supe=0$. If $\infx=-\infty$ and $z_1<z_2\leq \supx$, then $(z_1,z_2]$ is in the support of $X$, so for $\eta>0$ small enough, $F_{X|Z}(z_1+\eta|z_1)=1>F_{X|Z}(z_1+\eta|z_2)$. 
\end{proof}  

An open question is whether the reversed FOSD ranking can be generated for some signals when both $X$ and $\tilde{\epsilon}$ are unbounded below and above. 
The next section establishes the reversed ranking for $X$ bounded on one side and $\tilde{\epsilon}$ bounded on the opposite side or unbounded.

\section{Variable of interest bounded on one side}
\label{sec:exp}

The following theorem proves that the counterintuitive ranking of conditional distributions occurs for a wide range of signals when $X$ is bounded on at least one side. Corollaries of this result cover the empirically relevant exponential and Pareto distributions, as discussed subsequently. The theorem allows both independent and dependent $X$ and $\tilde{\epsilon}$. 

\begin{thm}
\label{thm:unbounded}
If $X\leq\supx<\infty$ and $\frac{d\ln f_{Z|X}(z|x)}{dz}$ decreases in $x$, then $\frac{dF_{X|Z}(w|z)}{dz} \geq 0$ for any $w\leq \supx$, with strict inequality if $\frac{d\ln f_{Z|X}(z|\cdot)}{dz}$ is strictly decreasing and $w\in(\infx, \supx)$. 
\end{thm}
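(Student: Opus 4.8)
The plan is to apply Bayes' rule, differentiate the resulting posterior cdf in the signal value $z$, and reduce the sign of the derivative to a covariance inequality. Writing $g(z|x):=f_{Z|X}(z|x)$ and integrating against the prior $F_X$ over the support of $X$ (where $\infx$ may be $-\infty$), Bayes' rule gives
\[
F_{X|Z}(w|z)=\frac{\int_{\infx}^{w}g(z|x)\,dF_X(x)}{\int_{\infx}^{\supx}g(z|x)\,dF_X(x)}.
\]
First I would differentiate this ratio in $z$, justifying the interchange of $\frac{d}{dz}$ and the integrals by dominated convergence. The denominator of the quotient-rule expression is the square of the normalizing constant, hence strictly positive, so $\frac{dF_{X|Z}(w|z)}{dz}$ has the sign of the numerator
\[
N:=\Big[\int_{\infx}^{w}\partial_z g\,dF_X\Big]\Big[\int_{\infx}^{\supx}g\,dF_X\Big]-\Big[\int_{\infx}^{w}g\,dF_X\Big]\Big[\int_{\infx}^{\supx}\partial_z g\,dF_X\Big].
\]

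The key step is to write $\partial_z g(z|x)=g(z|x)\,\ell(z|x)$ with $\ell(z|x):=\frac{d\ln f_{Z|X}(z|x)}{dz}$, the quantity assumed to decrease in $x$. Introducing the unnormalized posterior measure $d\nu(x):=g(z|x)\,dF_X(x)$ with total mass $T:=\nu([\infx,\supx])$, a direct manipulation collapses $N$ into $N=T^{2}\,\mathrm{Cov}_{\nu}\!\big(\mathbf{1}_{\{X\le w\}},\,\ell(z|X)\big)$, where the covariance is taken under the normalized posterior $\nu/T$. Both $\mathbf{1}_{\{X\le w\}}$ and $\ell(z|X)$ are nonincreasing functions of $X$---the indicator trivially and $\ell$ by hypothesis---so they are comonotone, and the covariance of two comonotone functions of one random variable is nonnegative by Chebyshev's integral inequality. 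Hence $N\ge 0$ and $\frac{dF_{X|Z}(w|z)}{dz}\ge 0$ for every $w\le\supx$.

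For the strict inequality I would use the representation $\mathrm{Cov}_{\nu}(\phi,\psi)=\tfrac12\,\mathbb{E}[(\phi(X)-\phi(X'))(\psi(X)-\psi(X'))]$ for independent copies $X,X'\sim\nu/T$, applied to $\phi=\mathbf{1}_{\{\cdot\le w\}}$ and $\psi=\ell(z|\cdot)$: each summand is nonnegative, and the expectation is strictly positive once there is positive $\nu$-probability that the two copies straddle $w$ while $\ell$ separates them. Strict monotonicity of $\ell(z|\cdot)$ ensures $\ell(z|X)\ne\ell(z|X')$ whenever $X\ne X'$, and $w\in(\infx,\supx)$ places positive posterior mass on both sides of $w$, so $\mathrm{Cov}_\nu>0$ and the derivative is strictly positive.

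The main obstacle I anticipate is the regularity bookkeeping rather than the algebra, which is mechanical. I would need an integrable dominating function for $\partial_z g$ to legitimize differentiating under the integral sign, and---especially when $\infx=-\infty$---I would have to verify that $T$ and all the integrals are finite, which is exactly where the one-sided boundedness $X\le\supx<\infty$ is doing its work. For the strict claim I would also confirm that $w\in(\infx,\supx)$ genuinely implies positive posterior mass on both $[\infx,w]$ and $(w,\supx]$, appealing to the fact that $[\infx,\supx]$ is the support of $X$.
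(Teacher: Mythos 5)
Your proof is correct, and it takes a genuinely different route from the paper's, even though the two arguments share their opening moves. Both apply Bayes' rule, the quotient rule, and the substitution $\partial_z f_{Z|X}=f_{Z|X}\,h$ with $h(z|x):=\frac{d\ln f_{Z|X}(z|x)}{dz}$; but where you collapse the numerator into $T^2\,\mathrm{Cov}_{\nu/T}\bigl(\mathbf{1}_{\{X\le w\}},\,h(z|X)\bigr)$ and invoke Chebyshev's integral inequality for comonotone functions, the paper never forms a covariance. It instead rewrites nonnegativity of the numerator as the ratio inequality~(\ref{unbounded}), observes that equality holds at the boundary $w=\supx$, and shows by differentiating the left-hand side in $w$ (this is where the monotonicity of $h(z|\cdot)$ enters) that it is nonincreasing, so it dominates the right-hand side for all $w\le\supx$; atoms of $F_X$ are accommodated there by a Dirac-delta convention for the density. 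Your route buys several things: the covariance formulation handles atomic $F_X$ natively, with no delta functions; it avoids differentiating the ratio in $w$, which in the paper requires a separate continuity remark; and the two-independent-copies identity makes the strict-inequality case more transparent than the paper's one-line assertion, isolating exactly what is needed --- positive posterior mass on both sides of $w$ together with $h(z|X)\neq h(z|X')$ off the diagonal --- where, just as in the paper, positivity of $f_{Z|X}(z|\cdot)$ on the support is implicitly presupposed by the hypothesis that $\ln f_{Z|X}$ is differentiable. The paper's route, in exchange, is self-contained (it effectively proves the correlation inequality in situ rather than citing it) and makes explicit the role of the endpoint $w=\supx$, where~(\ref{unbounded}) holds with equality. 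Your regularity caveats --- a dominating function to differentiate under the integral, and finiteness of $T$ and of the posterior mean of $h$ when $\infx=-\infty$, which is where $X\le\supx<\infty$ earns its keep --- are points the published proof passes over silently, so flagging them is a refinement, not a gap.
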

\begin{proof}
Recall that $\infx:=\inf X\in[-\infty,\supx)$. 
The derivative of $F_{X|Z}(w|z)$ w.r.t.\ $z$ is  
\begin{align*}
\frac{\int_{\infx}^{w}\frac{df_{Z|X}(z|x)}{dz}dF_{X}(x) \int_{\infx}^{\supx}f_{Z|X}(z|y)dF_{X}(y) -\int_{\infx}^{\supx}\frac{df_{Z|X}(z|y)}{dz}dF_{X}(y) \int_{\infx}^{w}f_{Z|X}(z|x)dF_{X}(x)}{\left[\int_{\infx}^{\supx}f_{Z|X}(z|y)dF_{X}(y)\right]^2}, 
\end{align*} 
nonnegative iff 
\begin{align}
\label{unbounded}
\frac{\int_{\infx}^{w}\frac{df_{Z|X}(z|x)}{dz}dF_{X}(x)}{\int_{\infx}^{w}f_{Z|X}(z|x)dF_{X}(x)} \geq \frac{\int_{\infx}^{\supx}\frac{df_{Z|X}(z|y)}{dz}dF_{X}(y)}{\int_{\infx}^{\supx}f_{Z|X}(z|y)dF_{X}(y)}. 
\end{align}
At $w=\supx$, equality holds in~(\ref{unbounded}). The LHS of~(\ref{unbounded}) is continuous in $w$ (even if $\frac{df_{Z|X}(z|x)}{dz}$ has jumps), so if the LHS decreases in $w$, then the LHS exceeds the RHS at all $w$. 

If $F_{X}$ has a discontinuity of height $H_x$ at $x$, then the Dirac delta function $\delta_x$ is used to represent the density: $f_{X}(x)=H_x\delta_x$. 
Define $h(z|x):=\frac{d\ln f_{Z|X}(z|x)}{dz}$. 
Because $\frac{df_{Z|X}(z|x)}{dz} =f_{Z|X}(z|x) h(z|x)$, the derivative of the LHS of~(\ref{unbounded}) w.r.t.\ $w$ is 
\begin{align*}
&\frac{f_{Z|X}(z|w)h(z|w)f_{X}(w)\int_{\infx}^{w}f_{Z|X}(z|x)dF_{X}(x) -f_{Z|X}(z|w)f_{X}(w)\int_{\infx}^{w}f_{Z|X}(z|x)h(z|x)dF_{X}(x)}{\left[\int_{\infx}^{w}f_{Z|X}(z|x)dF_{X}(x)\right]^2} 
\\&=\frac{f_{Z|X}(z|w)f_{X}(w)\int_{\infx}^{w}[h(z|w)-h(z|x)]f_{Z|X}(z|x)dF_{X}(x) }{\left[\int_{\infx}^{w}f_{Z|X}(z|x)dF_{X}(x)\right]^2} \leq 0, 
\end{align*}
because $h(z|\cdot)$ is decreasing. 
Therefore $\frac{dF_{X|Z}(w|z)}{dz}\geq 0$, which implies that for any $z_1<z_2$ close enough to $z$, $F_{X|Z}(w|z_1)$ FOSDs $ F_{X|Z}(w|z_2)$. 
If $h(z|\cdot)$ strictly decreases, then $\frac{dF_{X|Z}(w|z)}{dz}> 0$. 
\end{proof}


The assumption $X\leq \supx <\infty$ in Theorem~\ref{thm:unbounded} is for comparability to CH. Switching the signs of $X$, $Z$, $\infx$, $\supx$ and $w$ yields the following corollary. 
\begin{cor}
\label{cor:unbounded}
If $X\geq \infx>-\infty$ and $\frac{d\ln f_{Z|X}(z|x)}{dz}$ decreases in $x$, then $\frac{dF_{X|Z}(w|z)}{dz} \geq 0$ for any $w\geq \infx$, with strict inequality if $\frac{d\ln f_{Z|X}(z|\cdot)}{dz}$ is strictly decreasing and $w\in(\infx,\supx)$.  
\end{cor}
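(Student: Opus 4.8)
The plan is to deduce this corollary from Theorem~\ref{thm:unbounded} by reflection, avoiding any repetition of the variational argument. I would set $X':=-X$, $Z':=-Z$, and correspondingly $w':=-w$, $z':=-z$. Since $X\geq\infx>-\infty$, the reflected variable is bounded above, with $\sup X'=-\infx<\infty$ and $\inf X'=-\supx\in[-\infty,-\infx)$; this is precisely the configuration $X'\leq\sup X'<\infty$ assumed in Theorem~\ref{thm:unbounded}, with the finite and the possibly-infinite endpoints swapped.

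Next I would transform the conditional density. A change of variables with unit Jacobian gives $f_{Z'|X'}(z'|x')=f_{Z|X}(-z'|-x')$, so, writing $h(z|x):=\frac{d\ln f_{Z|X}(z|x)}{dz}$ as in the theorem, the reflected log-derivative is $\frac{d\ln f_{Z'|X'}(z'|x')}{dz'}=-h(-z'|-x')$.

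The one substantive check — and the only place a sign slip could occur — is that this reflected log-derivative still decreases in its conditioning argument. Two sign changes interact: increasing $x'$ decreases $-x'$, so by the hypothesis that $h(z|\cdot)$ decreases, $h(-z'|-x')$ increases in $x'$; the outer minus sign then reverses this, so $-h(-z'|-x')$ decreases in $x'$. Hence the monotonicity hypothesis is preserved (strictly iff $h(z|\cdot)$ is strictly decreasing), and Theorem~\ref{thm:unbounded} applies to the primed problem, delivering $\frac{dF_{X'|Z'}(w'|z')}{dz'}\geq0$ for every $w'\leq\sup X'=-\infx$.

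Finally I would translate back. Reflection turns the conditional cdf into its complement, $F_{X'|Z'}(w'|z')=1-F_{X|Z}(w|z)$ (for continuous $X$; atoms are handled by the Dirac-delta convention of the theorem and do not affect the $z$-derivative), and differentiating through $z'=-z$ contributes a second minus sign, so the two cancel and $\frac{dF_{X'|Z'}(w'|z')}{dz'}=\frac{dF_{X|Z}(w|z)}{dz}$. Therefore $\frac{dF_{X|Z}(w|z)}{dz}\geq0$, while the constraint $w'\leq-\infx$ becomes $w\geq\infx$, and the strict case $w'\in(\inf X',\sup X')$ maps to $w\in(\infx,\supx)$, matching the claim exactly. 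The main obstacle is thus not conceptual but bookkeeping: tracking each of the several sign reversals so that both the monotonicity direction and the derivative sign come out correctly. A direct alternative would be to rerun the proof of Theorem~\ref{thm:unbounded} integrating upward from $\infx$, but reflection makes this unnecessary.
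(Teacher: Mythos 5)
Your proposal is correct and is exactly the paper's own argument: the paper derives Corollary~\ref{cor:unbounded} from Theorem~\ref{thm:unbounded} by the one-line remark that switching the signs of $X$, $Z$, $\infx$, $\supx$ and $w$ yields the corollary, which is precisely the reflection $X'=-X$, $Z'=-Z$ you carry out. Your version simply makes the sign bookkeeping explicit (the two cancelling minus signs in the monotonicity of $-h(-z'|-x')$ and in $\frac{dF_{X'|Z'}(w'|z')}{dz'}=\frac{dF_{X|Z}(w|z)}{dz}$), which the paper leaves to the reader.
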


If the noise $\tilde{\epsilon}$ is independent of $X$, then Theorem~\ref{thm:unbounded} may be restated as follows. 
\begin{cor}
\label{cor:independent}
If $\tilde{\epsilon}$ is independent of $X$, $\supx<\infty$ and there exists $\peake$ s.t.\  $\frac{f_{\tilde{\epsilon}}'(\epsilon)}{f_{\tilde{\epsilon}}(\epsilon)}$ increases for $\epsilon\geq \peake$, then $F_{X|Z}(w|z_1) \leq F_{X|Z}(w|z_2)$ for any $w<\supx$ and $z_2 >z_1 \geq\supx +\peake$, and if $\frac{f_{\tilde{\epsilon}}'(\epsilon)}{f_{\tilde{\epsilon}}(\epsilon)}$ strictly increases, then $F_{X|Z}(w|z_1) < F_{X|Z}(w|z_2)$. 
\end{cor}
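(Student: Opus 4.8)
The plan is to read this off Theorem~\ref{thm:unbounded} essentially verbatim, the only work being to convert the hypothesis on $f_{\tilde{\epsilon}}'/f_{\tilde{\epsilon}}$ into the log-derivative monotonicity the theorem requires, and then to integrate the pointwise derivative bound. First I would use independence to write $f_{Z|X}(z|x)=f_{\tilde{\epsilon}}(z-x)$, so that $\frac{d\ln f_{Z|X}(z|x)}{dz}=\frac{f_{\tilde{\epsilon}}'(z-x)}{f_{\tilde{\epsilon}}(z-x)}$. The crucial observation is a chain-rule sign flip: for fixed $z$, increasing $x$ \emph{decreases} the argument $\epsilon=z-x$, so the assumption ``$\frac{f_{\tilde{\epsilon}}'(\epsilon)}{f_{\tilde{\epsilon}}(\epsilon)}$ increases in $\epsilon$'' is exactly the statement that $\frac{d\ln f_{Z|X}(z|x)}{dz}$ \emph{decreases} in $x$, which is the hypothesis of Theorem~\ref{thm:unbounded}.

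Next I would pin down the range over which this monotonicity is needed. For any $z\geq\supx+\peake$ and any $x$ in the support of $X$ (so $x\leq\supx$), the argument satisfies $\epsilon=z-x\geq z-\supx\geq\peake$. Hence for every such $z$, as $x$ ranges over the entire (possibly unbounded-below) support of $X$, the pair stays in the region $\epsilon\geq\peake$ where $f_{\tilde{\epsilon}}'/f_{\tilde{\epsilon}}$ is increasing; consequently $\frac{d\ln f_{Z|X}(z|\cdot)}{dz}$ is decreasing across the whole support. Applying Theorem~\ref{thm:unbounded} (with $X\leq\supx<\infty$) then yields $\frac{dF_{X|Z}(w|z)}{dz}\geq 0$ for every $w\leq\supx$ and every $z\geq\supx+\peake$.

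Finally I would integrate the pointwise bound. For $z_2>z_1\geq\supx+\peake$ and $w<\supx$, $F_{X|Z}(w|z_2)-F_{X|Z}(w|z_1)=\int_{z_1}^{z_2}\frac{dF_{X|Z}(w|z)}{dz}\,dz\geq 0$, which is the asserted weak inequality. For the strict conclusion, if $f_{\tilde{\epsilon}}'/f_{\tilde{\epsilon}}$ strictly increases then $\frac{d\ln f_{Z|X}(z|\cdot)}{dz}$ strictly decreases, so the strict clause of Theorem~\ref{thm:unbounded} gives $\frac{dF_{X|Z}(w|z)}{dz}>0$ for $w\in(\infx,\supx)$, and integrating a strictly positive integrand over $[z_1,z_2]$ delivers $F_{X|Z}(w|z_1)<F_{X|Z}(w|z_2)$.

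The main obstacle is not any new estimate but the bookkeeping around the sign flip and the range: I must be careful that the single scalar threshold $\peake$, imposed only on $\epsilon$, propagates into $x$-monotonicity uniformly over an $X$-support that may extend to $-\infty$, so that Theorem~\ref{thm:unbounded} genuinely applies at each fixed $z\geq\supx+\peake$ before the integration step. Once the threshold $z\geq\supx+\peake$ is seen to force $z-x\geq\peake$ for all admissible $x$, everything else is routine.
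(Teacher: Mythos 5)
Your proof is correct and takes essentially the same route as the paper: both reduce to Theorem~\ref{thm:unbounded} via the chain-rule sign flip and the observation that $z\geq\supx+\peake$ forces $\epsilon=z-x\geq\peake$ for every $x\leq\supx$, then integrate $\frac{dF_{X|Z}(w|z)}{dz}\geq 0$ over $[z_1,z_2]$. The only cosmetic difference is that the paper normalizes $\peake$ to zero by translating ($\hat{\epsilon}=\tilde{\epsilon}-\peake$, $\hat{Z}=Z-\peake$) whereas you verify the range condition directly, and your explicit integration step (with the strict clause correctly restricted to $w\in(\infx,\supx)$) spells out what the paper leaves implicit.
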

\begin{proof}
Define $\hat{\epsilon}=\tilde{\epsilon}-\peake$, $\hat{Z}=Z-\peake$ $f_{\hat{Z}|X}(\hat{z}|x)=f_{\hat{\epsilon}}(\hat{z}-x)$ and apply Theorem~\ref{thm:unbounded}. 

If $\frac{d\ln f_{\hat{\epsilon}}(\hat{z}-x)}{dz}$ decreases in $x$ at one $\hat{z}\geq\supx$, then it decreases in $x$ at any $\hat{z}\geq\supx$. Therefore $\frac{dF_{X|\hat{Z}}(w|\hat{z})}{d\hat{z}} =\frac{dF_{X|Z}(w|z-\peake)}{dz}\geq 0$ for any $z\geq \supx+\peake$, so $F_{X|Z}(w|z_1) \leq F_{X|Z}(w|z_2)$ for any $z_2 >z_1 \geq\supx +\peake$. If $\frac{d\ln f_{\hat{\epsilon}}(\hat{z}-x)}{dz}$ strictly decreases, then $\frac{dF_{X|\hat{Z}}(w|\hat{z})}{d\hat{z}}> 0$. 
\end{proof}

The noise in Corollary~\ref{cor:independent} has to be unbounded above if $X$ is unbounded below, as Corollary~\ref{cor:ruleout} shows. 
The noise may be symmetric, single-peaked and mean-zero, but these conditions are not necessary.\footnote{ 
To define a signal $S$ such that the event $S\geq z$ contains the same information as $Z=z$ (as in Theorem~\ref{thm:extremescreeningold}), the pdf of $Z$ needs to be decreasing, with a maximum value $1$. In this case, $f_{\tilde{\epsilon}}$ cannot be symmetric. 
}
If $X$ and $\tilde{\epsilon}$ are bounded, then the result of Corollary~\ref{cor:independent} holds for $\supe-\infe>\supx-\infx$ and $z_1,z_2\in[\supx +\peake,\infx+\supe]$. 
The result does not extend to a FOSD reversal for all signal pairs $z_1<z_2$ even if $\peake=\infe$, because $z_1<\supx+\infe$ rules out $X\in(z_1-\infe,\supx]$ that $z_2\geq \supx$ permits. 

The assumptions of Corollary~\ref{cor:independent} are satisfied by exponential noise $f_{\tilde{\epsilon}}(\epsilon) =e^{-\lambda\epsilon}$ for $\tilde{\epsilon},\lambda>0$, as well as any other noise pdf thicker-tailed than exponential. 
Such noise distributions are relevant in practice, for example in detecting tax evasion. 
Suppose the log of income or wealth is $-X\geq -\supx =0$, with $f_{X}(x)=e^x$, 
because empirically, income and wealth follow Pareto distributions. The log \emph{declared} income is $-Z\leq -X$, so $-\tilde{\epsilon}=X-Z\leq 0$ is the log of the fraction of income that is declared. 
The tax authority observes a specific taxpayer's log declared income $z$ and Bayesian updates its belief about the true income. 
Conditional on observable characteristics of the taxpayer, the updated distribution over income levels given a lower declaration FOSDs the distribution given a higher reported income. 

Another interpretation is that $-X\geq -\supx \in(-\infty,0]$ is wealth, $\tilde{\epsilon}\geq 0$ the amount by which wealth is underreported, and $-Z=-X-\tilde{\epsilon}$ the reported wealth, with $f_{X}$ and $f_{\tilde{\epsilon}}(\epsilon)$ Pareto distributions. 

Empirically, the conditional expectation of true wealth or income decreases over a range of declared wealth or income levels.  \cite{alstadsaeter+2019}\footnote{\cite{alstadsaeter+2019} Online appendix Table J.1, Excel file AJZAppendixH sheets DistribEvasion, ToStata, and file AJZAppendixG sheet Discloser(NOR1). 
} 
show that in the bottom deciles of declared wealth, a lower declaration is associated with a greater fraction of tax evaders and larger evasion amounts for income and wealth. Total wealth is larger for the first decile of disclosed wealth than for the second. 
In \cite{artavanis+2016}, the three bottom income bins exhibit a negative correlation between the income declared on the tax return and the income the authors infer from the credit extended by a bank. 
\cite{fagereng+2019} find that the fraction of income saved decreases in declared wealth for households declaring negative wealth, but increases for positive-wealth households. The median capital gains rate decreases fastest for the lowest wealth percentiles. Thus households whose net wealth is more negative seem to obtain higher returns on assets and save more, which suggests they under-report assets to a greater extent. 
\cite{johannesen+2018} Figure A.1 shows that among the US persons in the Offshore Voluntary Disclosure program, the ratio of the disclosed account value to the capital income in the previous year is decreasing and convex in their capital income percentile up to the 90th percentile. In other words, those who previously declared less capital income have disproportionately larger offshore accounts. 

\cite{alstadsaeter+2019} find that conditional on hiding wealth, the fraction hidden is approximately 30\% at all wealth levels. The fraction of income undeclared is about 40\% conditional on under-reporting, independently of wealth. The probability of hiding wealth rises with wealth, which may be modelled as follows. 
Denote the probability of evading taxes at (negative) wealth $X=x$ by $p(x)>0$, the pdf of the fraction of income declared conditional on evading by $g$, and the Dirac delta function at $\alpha$ by $\delta_{\alpha}$. 
Define the noise pdf as $f_{\tilde{\epsilon}|X}(\epsilon|x) =p(x)\delta_{0} +(1-p(x))g(\epsilon)$. 
Then
\begin{align*}
F_{X|Z}(w|z) 
=\frac{\text{\textbf{1}}_{z\leq w}f_{X}(z)p(z) +\int_{-\infty}^{w}f_{X}(x)(1-p(x))g(z-x)dx}{\text{\textbf{1}}_{z\leq 0}f_{X}(z)p(z) +\int_{-\infty}^{0}f_{X}(y)(1-p(y))g(z-y)dy}, 
\end{align*} 
which implies that for any continuous $g$, there exists $\overline{z}>0$ such that for any $z\in(0,\overline{z})$ and $w<0$, 
\begin{align*}
F_{X|Z}(w|0) 
=\frac{\int_{-\infty}^{w}f_{X}(x)(1-p(x))g(-x)dx}{f_{X}(0)p(0) +\int_{-\infty}^{0}f_{X}(y)(1-p(y))g(-y)dy}
<\frac{\int_{-\infty}^{w}f_{X}(x)(1-p(x))g(z-x)dx}{\int_{-\infty}^{0}f_{X}(y)(1-p(y))g(z-y)dy} =F_{X|Z}(w|z). 
\end{align*}
If $g(\epsilon)=e^{-\lambda \epsilon}$ with $\lambda>0$, or $g$ is thicker-tailed than exponential, then $\overline{z}=\infty$. 

\section{Discussion}
\label{sec:discussion}

The connection established in this note between CH and LL allows using the stronger result in the former to study the diverse applications in the latter. For example, a journal editor screening papers based on referee reports may obtain not just a better expected quality of papers by reducing the acceptance cutoff, but a FOSD-improved distribution of quality. 
A bank choosing borrowers or an investor picking projects based on quantitative criteria, such as credit ratings, may shift the return distribution up (in the FOSD sense) by relaxing the criteria. 
The optimal strategy of accepting applicants to educational institutions or conducting an affirmative action policy may not be monotone in people's observable characteristics. Switching to a better strategy may improve outcomes at every point of the distribution of characteristics. 
Similarly, in an auction, 
requiring the opening bid to be in a disconnected set (as opposed to above a cutoff) may FOSD-improve the revenue distribution. 

The extension of the results to unbounded distributions, in particular exponential and Pareto, opens up additional applications: tax evasion and measurement errors in the wealth and income distributions. For example, the asset distribution of people reporting a lower wealth level may FOSD the distribution of those making a higher report. The extended results hold for a wide range of signal pairs, thus the FOSD reversal is a robust phenomenon.

\appendix
\section{Notes on other results in LL}

LL Theorem 2 implicitly assumes that the utility functions are strictly increasing and the FOSD ranking of the conditional distributions $F_{X|Z\in[z_1,z_2]}$ 
is strict. If weak FOSD is allowed, then the signal $Z$ could be uninformative over an interval $[z_1,z_2]$, in which case the expected value $\mathbb{E}[X|Z=z] =:x_E$ of the random variable of interest is the same for any $z\in[z_1,z_2]$. Then for any utility function $u$ such that $u(x_E)=0$, there is a continuum of optimal strategies, because changing the acceptance probability on any subset of $[z_1,z_2]$ results in a different optimal strategy. Most of these optimal strategies are not cutoff strategies. 
If $u$ is permitted to be zero on an interval $[x_1,x_2]$ of realizations of $X$, then there is similarly a continuum of optimal non-cutoff strategies that differ by the acceptance probability on $[x_1,x_2]$. 

The statement of Claim 3 in LL only assumes $X$ is continuous and $\tilde{\epsilon}=Z-X$ is discrete with finite support $n_1<n_2<\cdots<n_k$, but the proof implicitly assumes that $X$ is bounded above: $X\leq \overline{X}<\infty$. The proof uses the sets $A:=\left(\overline{X}+n_{k-1},\overline{X}+n_{k-1}+\varepsilon\right)$ and $B:=\left(\overline{X}-n_{k-1}-\varepsilon,\overline{X}-n_{k-1}\right)$ to define the support $supp(X|_A):=supp F_{X|Z\in A}$ and similarly for $B$. If $\overline{X}=\infty$, then $supp(X|_A)=supp(X|_B)=\emptyset$ and the probabilities $\Pr\left(X>\overline{X}-\varepsilon|A\right)$ and $ \Pr\left(X>\overline{X}-\varepsilon|B\right)$ are undefined.

\bibliographystyle{chicago} 
\bibliography{teooriaPaberid} 

\end{document}